\newtheorem{theorem}{Theorem}
\newtheorem{lemma}{Lemma}
\newtheorem{remark}{Remark}
\def\R{\mathbb R}
\def\pa{{\partial\Omega}}
\def\h{h}
\def\dc{\partial}
\def\ee{\zeta}
\title{On the exponential decay of Laplacian eigenfunctions in planar domains with branches}
\author{Binh T. Nguyen$^1$, Andrey L. Delytsin$^2$ and Denis S. Grebenkov$^1$}
\begin{document}

\maketitle

\begin{center}
$^1$ Laboratoire de Physique de la Mati\`ere Condens\'ee \\
CNRS -- Ecole Polytechnique, F-91128 Palaiseau, France \\
$^2$ Mathematical Department of the Faculty of Physics, \\
Moscow State University, 119991 Moscow, Russia  \\
~ \\
Corresponding author: denis.grebenkov@polytechnique.edu
\end{center}

\date{Received: date / Accepted: date}

\maketitle

\begin{abstract}
We consider the eigenvalue problem for the Laplace operator in a
planar domain which can be decomposed into a bounded domain of
arbitrary shape and elongated ``branches'' of variable cross-sectional
profiles.  When the eigenvalue is smaller than a prescribed threshold,
the corresponding eigenfunction decays exponentially along each
branch.  We prove this behavior for Robin boundary condition and
illustrate some related results by numerically computed
eigenfunctions.
\end{abstract}

\section{Introduction}

The geometrical structure of Laplacian eigenfunctions has been
thoroughly investigated (see the review \cite{Grebenkov13} and
references therein).  When a domain can be seen as a union of two (or
many) subdomains with narrow connections, some low-frequency
eigenfunctions can be found localized (or trapped) in one subdomain
and of small amplitude in other subdomains.  Qualitatively, an
eigenfunction cannot ``squeeze'' through a narrow connection when its
typical wavelength is larger than the connection width.  This
qualitative picture has found many rigorous formulations for dumbbell
shapes and classical and quantum waveguides
\cite{Ashbaugh90,Goldstone92,Duclos95,Exner04,Linton07,Jimbo09,Filoche09,Odenski10,Delitsyn12a,Delitsyn12b,Filoche12}.
Numerical and experimental evidence for localization in
irregularly-shaped domains was also reported
\cite{Sapoval89,Sapoval91,Sapoval93,Sapoval97,Haeberle98,Even99,Hebert99,Felix07,Heilman10}.

In a recent paper, we considered the Laplacian eigenvalue problem
\begin{equation}
\label{eq:Laplace_in_D}
\begin{split}
\Delta u + \lambda u & = 0 \quad \text{in}~ \Omega, \\
 u & = 0 \quad \text{on}~ \partial \Omega,\\
\end{split}
\end{equation} 
for a large class of domains $\Omega$ in $\R^d$ ($d = 2,3,...$) which
can be decomposed in a ``basic'' bounded domain $V$ and a branch $Q$
of a variable cross-sectional profile \cite{Delitsyn12a}.  We proved
that if the eigenvalue $\lambda$ is smaller than the smallest
eigenvalue $\mu$ among all cross-sections of the branch, then the
associated eigenfunction $u$ exponentially decays along that branch:

\begin{theorem}
\label{theo:Dirichlet}
Let $\Omega\subset \R^d$ ($d = 2,3,...$) be a bounded domain with a
piecewise smooth boundary $\pa$ and let $Q(z) = \Omega\cap \{
x\in\R^d~:~ x_1 = z\}$ the cross-section of $\Omega$ at $x_1 = z \in
\R$ by a hyperplane perpendicular to the coordinate axis $x_1$
(Fig. \ref{fig:branches}).  Let
\begin{equation*}
z_1 = \inf\{ z\in \R~:~ Q(z) \ne \emptyset\} , \qquad z_2 = \sup\{ z\in \R~:~ Q(z) \ne \emptyset\} ,
\end{equation*}
and we fix some $z_0$ such that $z_1 < z_0 < z_2$.  Let $\mu(z)$ be
the first eigenvalue of the Laplace operator in $Q(z)$, with Dirichlet
boundary condition on $\partial Q(z)$, and $\mu = \inf\limits_{z\in
(z_0,z_2)} \mu(z)$.  Let $u$ be a Dirichlet-Laplacian eigenfunction in
$\Omega$ satisfying (\ref{eq:Laplace_in_D}), and $\lambda$ the
associate eigenvalue.  If $\lambda < \mu$, then
\begin{equation}
\label{eq:expon_estimate}
\|u\|_{L_2(Q(z))} \leq \|u\|_{L_2(Q(z_0))} \exp(-\beta \sqrt{\mu-\lambda}~(z - z_0))   \quad  (z \geq z_0),
\end{equation} 
with $\beta = 1/\sqrt{2}$.  Moreover, if $(e_1 \cdot n(x)) \geq 0$ for
all $x\in\pa$ with $x_1 > z_0$, where $e_1$ is the unit vector
$(1,0,...,0)$ in the direction $x_1$, and $n(x)$ is the normal vector
at $x\in\pa$ directed outwards the domain, then the above inequality
holds with $\beta = 1$.
\end{theorem}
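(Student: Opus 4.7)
The strategy is to reduce the theorem to a one-dimensional differential inequality for the sliced $L^2$-norm $h(z) := \|u\|^2_{L^2(Q(z))}$. I will establish $h'' \geq 2(\mu-\lambda)h$ distributionally on $(z_0, z_2)$ for the basic bound with $\beta = 1/\sqrt{2}$, and strengthen it to $h'' \geq 4(\mu-\lambda)h$ under the boundary hypothesis for the improved bound with $\beta = 1$; in each case a one-dimensional maximum principle closes the argument.

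To derive the basic inequality, I test the eigenvalue equation against $\phi(x_1) u$, where $\phi \in C_c^\infty(z_0, z_2)$ is arbitrary and nonnegative. Two integrations by parts (using $u|_{\partial\Omega} = 0$ and $\nabla \phi = \phi'(x_1) e_1$) give
\begin{equation*}
\int_\Omega \phi(x_1) |\nabla u|^2 dx = \lambda \int_\Omega \phi(x_1) u^2 dx + \frac{1}{2}\int_\Omega \phi''(x_1) u^2 dx.
\end{equation*}
Dropping the nonnegative $\int \phi (\partial_{x_1} u)^2$ contribution on the left and applying the cross-sectional Poincar\'e inequality $\int_{Q(z)}|\nabla_\perp u|^2 dS \geq \mu(z) h(z) \geq \mu h(z)$ (valid for $z > z_0$, where $\mu(z) \geq \mu$), I obtain
\begin{equation*}
2(\mu - \lambda)\int_{z_0}^{z_2} \phi(z) h(z) dz \leq \int_{z_0}^{z_2} \phi''(z) h(z) dz,
\end{equation*}
which is precisely $h'' \geq 2(\mu-\lambda) h$ in $\mathcal{D}'(z_0, z_2)$.

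With $\alpha := \sqrt{2(\mu-\lambda)}$, the comparison function $w(z) := h(z_0) e^{-\alpha(z-z_0)}$ satisfies $w'' = \alpha^2 w$. Since $h(z_0) = w(z_0)$ and $h(z_2) = 0 \leq w(z_2)$, the difference $h - w$ is a distributional subsolution of $-\partial_z^2 + \alpha^2$ with nonpositive boundary values, so the weak maximum principle yields $h(z) \leq h(z_0) e^{-\alpha(z-z_0)}$ on $[z_0, z_2]$; taking the square root delivers $\beta = 1/\sqrt{2}$. For the improved constant, I would invoke a Rellich/Pohozaev identity: multiplying the equation by $\partial_{x_1} u$ and integrating over $\Omega_z := \Omega \cap\{x_1 > z\}$, then using $\nabla u = (\partial_n u) n$ on $\partial\Omega$ (a consequence of $u|_{\partial\Omega} = 0$), I arrive at
\begin{equation*}
\int_{Q(z)} (\partial_{x_1} u)^2 dS = \int_{Q(z)} |\nabla_\perp u|^2 dS - \lambda h(z) + \int_{\partial\Omega\cap\{x_1>z\}} (e_1\cdot n)(\partial_n u)^2 dS.
\end{equation*}
Under the geometric hypothesis the last term is nonnegative, so combined with Poincar\'e one gets $\int_{Q(z)}(\partial_{x_1} u)^2 dS \geq (\mu-\lambda) h(z)$. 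Substituting this into the identity $h''(z) = 2\int_{Q(z)}|\nabla u|^2 dS - 2\lambda h(z)$ (obtained by retaining the $\int\phi(\partial_{x_1} u)^2$ term in the distributional computation above) upgrades the differential inequality to $h'' \geq 4(\mu-\lambda) h$, and the same maximum-principle comparison now gives $\beta = 1$.

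The main obstacle I expect is the rigorous justification of the Rellich identity on a domain whose boundary is only piecewise smooth: the classical Pohozaev computation presupposes $u \in H^2(\Omega)$ and a well-defined outward normal at every boundary point, so some approximation scheme (for example, exhausting $\Omega$ by smooth subdomains that still satisfy $(e_1 \cdot n) \geq 0$, applying the identity there, and passing to the limit) will be needed in order to isolate the $Q(z)$-contribution with the correct sign. The basic inequality $h'' \geq 2(\mu-\lambda) h$, by contrast, is obtained entirely in weak distributional form and so sidesteps any differentiability issue for the scalar function $h$ itself.
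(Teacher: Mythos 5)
Your proposal is correct and follows essentially the same route as the paper's (cited) proof: a second-order differential inequality for the sliced quantity $\|u\|^2_{L_2(Q(z))}$, obtained from Green's identities together with the cross-sectional Rayleigh principle, with the factor $2(\mu-\lambda)$ upgraded to $4(\mu-\lambda)$ by a Rellich-type identity whose lateral boundary term $\int (e_1\cdot n)(\partial u/\partial n)^2\,dS$ is discarded using the hypothesis $(e_1\cdot n)\geq 0$ --- exactly the role that condition plays in the paper. The only differences are cosmetic (you integrate the differential inequality by a weak maximum principle rather than by the paper's multiply-by-$I'$-and-integrate-twice device, and you correctly flag the $H^2$-regularity issue near corners that any rigorous treatment of the Rellich identity must address).
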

In this theorem, a domain $\Omega$ is arbitrarily split into two
subdomains, a ``basic'' domain $\Omega_1$ (with $x_1 < z_0$) and a
``branch'' $\Omega_2$ (with $x_1 > z_0$), by the hyperplane at $x_1 =
z_0$ (the coordinate axis $x_1$ can be replaced by any straight line).
Under the condition $\lambda < \mu$, the eigenfunction $u$
exponentially decays along the branch $\Omega_2$.  Note that the
choice of the splitting hyperplane (i.e., $z_0$) determines the
threshold $\mu$.  Since $\mu$ is independent of the basic domain $V$,
one can impose any boundary condition on $\partial \Omega_1$ (that
still ensures the self-adjointness of the Laplace operator).  In turn,
the Dirichlet boundary condition on the boundary of the branch
$\Omega_2$ was relevant.  Many numerical illustrations for this
theorem were given in \cite{Delitsyn12a}.

\begin{figure}
\begin{center}
\includegraphics[width=120mm]{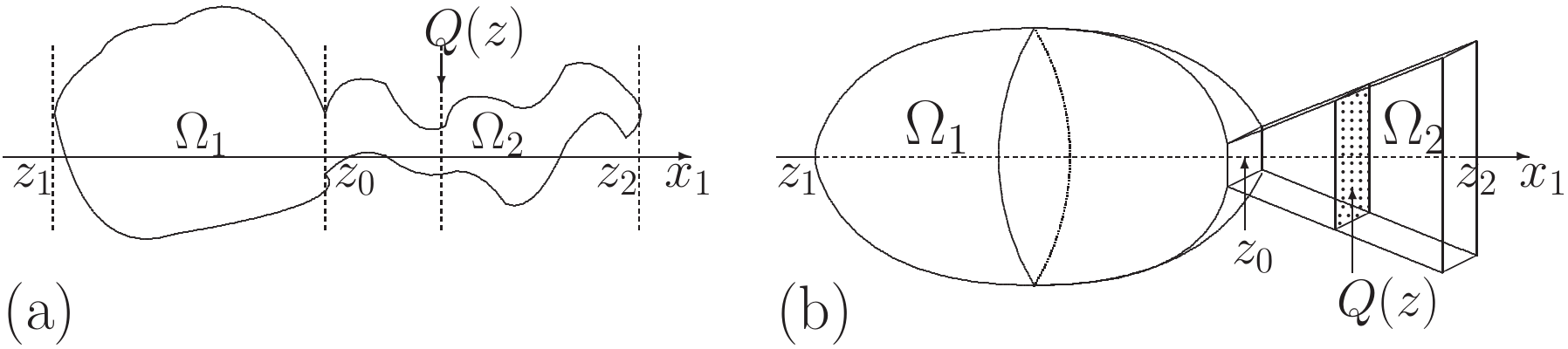}
\end{center}
\caption{
Two examples of a bounded domain $\Omega = \Omega_1 \cup \Omega_2$
with a branch $\Omega_2$ of variable cross-sectional profile $Q(z)$.
When the eigenvalue $\lambda$ is smaller than the threshold $\mu$, the
associated eigenfunction exponentially decays in the branch $\Omega_2$
and is thus mainly localized in $\Omega_1$.  Note that the branch
itself may even be increasing.}
\label{fig:branches}
\end{figure}

\begin{remark}
It is worth stressing that the sufficient condition $\lambda < \mu$
involves purely spectral information: the eigenvalue $\lambda$ in the
whole domain and the smallest eigenvalue $\mu$ over all
cross-sections.  A simple geometrical condition on the basic domain
$\Omega_1$ can be formulated through the inradius $\rho$ of $\Omega_1$
(or $\Omega$), i.e., the radius of the largest inscribed ball
$B_\rho$.  Since $B_\rho \subset \Omega$, the first Dirichlet
eigenvalue $\lambda$ is bounded as $\lambda \leq \lambda_1(B_\rho) =
j_{\frac{d}{2}-1}^2/\rho^2$, where $j_{\frac{d}{2}-1}$ is the first
positive zero of the Bessel function $J_{\frac{d}{2}-1}(z)$.  A
sufficient geometrical condition for getting exponentially decaying
eigenfunction is then
\begin{equation}
\label{eq:rho}
\rho > j_{\frac{d}{2}-1}/\sqrt{\mu} .
\end{equation}
For planar domains, the inequality yields $\rho/b > j_0/\pi$ where $b$
is the largest width of the branch, and $j_0 \simeq 2.4048$.  This
inequality includes only the inradius of $\Omega_1$ (or $\Omega$) and
the largest width of the branch, while the length of the branch can be
varied arbitrarily.  For instance, the localization in the basic
domain $\Omega_1$ may hold even when the area of the branch $\Omega_2$
is arbitrarily large, as compared to the area of $\Omega_1$.
\end{remark}

\begin{remark}
For higher dimensions ($d \geq 3$), the localization may sound even
more striking, as the ``branch'' has to be ``narrow'' only in one
direction (Fig. \ref{fig:branches}).  For instance, if the branch
$\Omega_2$ has a constant width $b$ in one direction, then the
smallest eigenvalue $\mu$ in its cross-sections is greater than
$\pi^2/b^2$.  If the inradius $\rho$ of $\Omega_1$ is greater than $b
j_{\frac{d}{2}-1}/\pi$ then the inequality (\ref{eq:rho}) holds, and
at least the first eigenfunction is localized in $\Omega_1$.  In the
three-dimensional space, $j_{\frac{1}{2}} = \pi$ so that the inradius
has to be just greater than $b$: $\rho > b$.  A simple example is a
domain decomposed into the unit cube $\Omega_1 = \{ (x,y,z)\in\R^3~:~
0<x<1,~0<y<1,~0<z<1\}$ and a parallelepiped $\Omega_2 = \{
(x,y,z)\in\R^3~:~ 1<x<L_x,~ -L_y<y<L_y,~ 0<z<b\}$.  When $b < 1/2$,
the first eigenfunction is localized in the cube $\Omega_1$, whatever
the lateral spatial sizes $L_x$ and $L_y$ of the ``branch'' are.
\end{remark}

In the remainder of the paper, we extend the above result to the
Laplace operator in planar domains with Robin boundary condition.  We
also provide several numerical illustrations of localized
eigenfunctions in planar domains in Sec. \ref{sec:illustrations}.

\section{Extension for Robin boundary condition}

We consider the eigenvalue problem for the Laplace operator in a
planar domain $\Omega$ with Robin boundary condition on a piecewise
smooth boundary $\pa$:
\begin{equation}
\label{eq:Laplace_R}
\begin{split}
  \Delta u + \lambda u & = 0 \quad \text{in}~ \Omega ,  \\
  \frac{\dc u}{\dc n} + \h u & = 0 \quad \text{on}~ \pa , \\
\end{split}
\end{equation}
where $h$ is a nonnegative function, and $\partial/\partial n$ is the
normal derivative directed outwards the domain.  In that follows, we
prove the following
\begin{theorem}
\label{theo:Robin}
Let $\Omega = \Omega_1 \cup \Omega_2$, where $\Omega_1\subset \R^2$ is
a bounded domain, and
\begin{equation}
\Omega_2 = \left\{ (x,y)\in\R^2~:~ 0 < x < a, ~ y_1(x) < y < y_2(x) \right\}, 
\end{equation}
is a branch of length $a > 0$ and of variable cross-sectional profile
which is defined by two functions $y_1,y_2\in C^1([0,a])$ such that
$y_2(a) = y_1(a)$, $y'_1(x) \geq 0$ and $y'_2(x) \leq 0$.  Let $u$ and
$\lambda$ be an eigenfunction and eigenvalue of $\Omega$ satisfying
the eigenvalue problem (\ref{eq:Laplace_R}), with a nonnegative
function $h$.  We define $\mu = \inf\limits_{x_0 < x < a}
\mu_1(x)$ where $\mu_1(x)$ is the first eigenvalue of the
Laplace operator in the cross-sectional interval $[y_1(x),y_2(x)]$:
\begin{equation}
\label{eq:eigen_cross}
\begin{split}
 v''(y) + \mu_1(x) v(y) &= 0 \quad (y_1(x) < y < y_2(x))\\
    v'(y) - h_1(x) v(y) &= 0 \quad (y = y_1(x))  \\
    v'(y) + h_2(x) v(y) &= 0 \quad (y = y_2(x))  \\
\end{split}
\end{equation}
where 
\begin{equation}
\label{eq:hi}
h_i(x) \equiv \h(y_i(x)) \sqrt{1 + [y'_i(x)]^2} \qquad (i=1,2).
\end{equation}

If $\lambda < \mu$, then
\begin{equation}
\label{eq:expon_estimateR}
\|u\|_{L_2(\Omega(x))} \leq \|u\|_{L_2(\Omega(x_0))} \exp(-\beta \sqrt{\mu-\lambda}~(x - x_0))   \quad  (x \geq x_0) ,
\end{equation} 
where $\beta = 1/\sqrt{2}$ and $\Omega(x_0) = \{ (x,y)\in\R^2~:~ x_0 <
x < a,~ y_1(x) < y < y_2(x) \}$.
\end{theorem}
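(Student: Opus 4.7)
The plan is to reduce Theorem~\ref{theo:Robin} to a scalar differential inequality for
\[
F(x) = \int_{\Omega(x)} u^2\, dA = \int_x^a G(s)\, ds, \qquad G(s) = \int_{y_1(s)}^{y_2(s)} u^2(s,y)\, dy,
\]
and then to finish by a one-dimensional maximum-principle comparison against an exponential. The target is $F''(x) \geq 2(\mu - \lambda) F(x)$ on $(x_0, a)$; combined with $F \geq 0$ and $F(a) = 0$ (the latter because $y_1(a) = y_2(a)$), this forces $F(x) \leq F(x_0)\exp(-\sqrt{2(\mu-\lambda)}(x-x_0))$ by comparing with the explicit solution of the associated linear ODE, which is precisely (\ref{eq:expon_estimateR}) with $\beta = 1/\sqrt{2}$.

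First I would establish an energy identity on $\Omega(x)$ by multiplying $\Delta u + \lambda u = 0$ by $u$ and integrating. The boundary $\partial\Omega(x)$ splits into the vertical segment $Q(x)$ (outward normal $-e_1$) and the two graphs $y = y_i(s)$, $x \leq s \leq a$, on which Robin gives $u\,\dc u/\dc n = -h u^2$. Parameterizing each graph by $s$ so that the arc-length element is $\sqrt{1+(y_i'(s))^2}\,ds$, the Robin contribution becomes $\int_x^a h_i(s)\, u(s, y_i(s))^2\, ds$ with precisely the weights of (\ref{eq:hi}). The result is
\[
\int_{\Omega(x)} (u_x^2 + u_y^2)\, dA + \int_x^a \bigl[h_1(s) u(s,y_1(s))^2 + h_2(s) u(s,y_2(s))^2\bigr] ds + \int_{Q(x)} u u_x\, dy = \lambda F(x).
\]
For each $s\in(x_0,a)$, the Rayleigh quotient associated with (\ref{eq:eigen_cross}) applied to $y\mapsto u(s,y)$ yields
\[
\int_{y_1(s)}^{y_2(s)} u_y^2\, dy + h_1(s) u(s,y_1(s))^2 + h_2(s) u(s,y_2(s))^2 \geq \mu_1(s) G(s) \geq \mu G(s).
\]
Integrating in $s$ from $x$ to $a$ and substituting into the energy identity, the $u_y^2$ and Robin contributions absorb into at least $\mu F(x)$; after discarding the nonnegative $\int_{\Omega(x)} u_x^2$, I arrive at $(\mu-\lambda) F(x) \leq -\int_{Q(x)} u u_x\, dy$.

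The last step converts the cross-section flux into $F''$. Since $F'(x) = -G(x)$, Leibniz' rule gives
\[
G'(x) = 2\int_{Q(x)} u u_x\, dy + u(x,y_2(x))^2\, y_2'(x) - u(x,y_1(x))^2\, y_1'(x);
\]
both pointwise boundary terms are $\leq 0$ by the sign hypotheses $y_2' \leq 0$ and $y_1' \geq 0$, so $\int_{Q(x)} u u_x\, dy \geq \tfrac{1}{2} G'(x) = -\tfrac{1}{2} F''(x)$. Combining with the previous paragraph yields the desired $F''(x) \geq 2(\mu-\lambda) F(x)$. The one delicate bookkeeping point, which I expect to be the main obstacle, is ensuring that the lateral arc-length factor $\sqrt{1+(y_i')^2}$ produced by the divergence theorem matches exactly the weight built into the definition (\ref{eq:hi}) of $h_i$, so that the Robin contribution in the global identity can be absorbed cleanly by the cross-sectional Rayleigh quotient. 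The monotonicity hypotheses $y_1' \geq 0,\ y_2' \leq 0$ enter only in the last step and are essential there: without them the two terms $u^2 y_i'$ could carry the wrong sign and the inequality $\int u u_x \geq \tfrac{1}{2} G'$ would break down.
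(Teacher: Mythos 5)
Your proposal is correct and follows essentially the same route as the paper's proof: the same functional $I(x)=\int_{\Omega(x)}u^2$, the same Green's-identity computation producing the arc-length--weighted Robin terms $h_i(x)$, the same cross-sectional Rayleigh quotient to absorb $\int u_y^2$ plus the boundary terms into $\mu$, the same use of the sign conditions $y_1'\ge 0$, $y_2'\le 0$ in the Leibniz differentiation, and the same key inequality $I''\ge 2(\mu-\lambda)I$. The only immaterial difference is the final step, where you invoke a maximum-principle comparison with the explicit solution of $\phi''=2(\mu-\lambda)\phi$ vanishing at $x=a$, while the paper multiplies the inequality by $I'<0$, integrates using $I(a)=I'(a)=0$ to get $-I'\ge\sqrt{2(\mu-\lambda)}\,I$, and integrates once more.
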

\begin{proof}
The proof relies on Maslov's differential inequality and follows the
scheme that we used in \cite{Delitsyn12a} for Dirichlet boundary
condition.  We consider the squared $L_2$-norm of the eigenfunction
$u$ in the ``subbranch'' $\Omega(x_0)$:
\begin{equation*}
I(x_0) = \int \limits_{\Omega(x_0)} dx dy~ u^2  = \int\limits_{x_0}^a dx \int\limits_{y_1(x)}^{y_2(x)} dy ~ u^2(x,y)
\end{equation*}
and derive the inequality for its second derivative: 
\begin{equation}
\label{eq:Ipp}
I''(x_0) \geq 2(\mu - \lambda)I_0(x_0).
\end{equation}

(i) From the first derivative
\begin{equation*}
I'(x_0) = - \int \limits_{y_1(x_0)}^{y_2(x_0)} dy~ u^2(x_0, y)  ,
\end{equation*}
we obtain
\begin{equation*}
I''(x_0) = - 2 \int \limits_{y_1(x_0)}^{y_2(x_0)} dy ~ u \frac{\dc u}{\dc x} - y'_2(x_0) u^2(x_0, y(x_0)) + y'_1(x_0) u^2(x_0, y_1(x_0)) 
\geq - 2 \int \limits_{y_1(x_0)}^{y_2(x_0)} dy ~ u \frac{\dc u}{\dc x} ,
\end{equation*}
where we used the conditions $y'_2(x) \leq 0$ and $y'_1(x) \geq 0$.
Taking into account that
\begin{eqnarray*}
 - \int \limits_{y_1(x_0)}^{y_2(x_0)} dy ~ u \frac{\dc u}{\dc x} &=&
- \int \limits_{y_1(x_0)}^{y_2(x_0)} dy~ u \frac{\dc u}{\dc x} + \int \limits_{S(x_0)} dS~ u \frac{\dc u}{\dc n} + \int \limits_{S(x_0)} dS~ \h u^2  \\
&=& \int \limits_{\Omega(x_0)} dx dy~ {\rm div} (u \nabla u) + \int \limits_{S(x_0)} dS~ \h u^2   \\
&=& \int \limits_{\Omega(x_0)} dx dy~ (\nabla u, \nabla u) + \int \limits_{\Omega(x_0)} dx dy ~ u \Delta u  + \int \limits_{S(x_0)} dS~ \h  u^2  \\
&=& \int \limits_{\Omega(x_0)} dx dy~ (\nabla u, \nabla u) - \lambda \int \limits_{\Omega(x_0)} dx dy ~ u^2 + \int \limits_{S(x_0)} dS~ \h u^2  
\end{eqnarray*}
where 
\begin{equation*}
S = S_1 \cup S_2, \qquad  S_i = \{ (x,y)\in \R^2~:~ 0<x<a, ~ y = y_i(x)\} \quad (i=1,2)
\end{equation*}
is the ``lateral'' boundary of $\Omega_2$, we obtain
\begin{equation*}
\begin{split}
I''(x_0) &\geq 2 \int \limits_{S(x_0)} dS~ \h  u^2 + 2 \int \limits_{\Omega(x_0)} dx dy \biggl(\frac{\dc u}{\dc y}\biggr)^2 
- 2 \lambda \int\limits_{\Omega(x_0)} dx dy ~ u^2 \\
& = 2 \int \limits_{x_0}^a dx \biggl\{ \h_2(x)~u^2(x,y_2(x)) + \h_1(x) ~ u^2(x,y_1(x))  
 + \int \limits_{y_1(x)}^{y_2(x)} dy \biggl[\biggl(\frac{\dc u}{\dc y}\biggr)^2 - \lambda u^2\biggr]\biggr\} , \\ 
\end{split}
\end{equation*}
where $h_i(x)$ is defined by Eq. (\ref{eq:hi}).

According to the Rayleigh principle, the first eigenvalue $\mu_1(x)$
of the eigenvalue problem (\ref{eq:eigen_cross}) on the
cross-sectional interval $[y_1(x), y_2(x)]$ can be written as
\begin{equation}
\label{eq:Rayleigh}
\mu_1(x) = \inf \limits_{u \in H^1([y_1(x), y_2(x)])} \frac{ 
\biggl[\h_2(x) ~ u^2(x,y_2(x)) + \h_1(x) ~ u^2(x,y_1(x)) \biggr]
+ \int \limits_{y_1(x)}^{y_2(x)} dy (\frac{\dc u}{\dc y})^2}{\int \limits_{y_1(x)}^{y_2(x)} dy ~u^2(x,y) } ,
\end{equation}
from which we get
\begin{equation*}
\begin{split}
I''(x_0) & \geq 2\int \limits_{x_0}^a dx (\mu_1(x) - \lambda) \int \limits_{y_1(x)}^{y_2(x)} dy ~u^2(x,y)
\geq 2(\mu - \lambda) \int \limits_{x_0}^a dx  \int \limits_{y_1(x)}^{y_2(x)} dy~ u^2(x,y) = 2(\mu - \lambda) I(x_0) , \\
\end{split}
\end{equation*}
where $\mu = \inf\limits_{x_0 < x < a} \mu_1(x)$.  That completes the
first step.  

(ii) We easily check the following relations:
\begin{equation}
\label{eq:Ipp2}
I(a) = 0, \qquad I'(a) = 0, \qquad I(x_0) \ne 0 \quad (0<x_0<a), \qquad I'(x_0) < 0 \quad (0<x_0<a) .
\end{equation}
Note that the second relation relies on the assumption that $y_1(a) =
y_2(a)$.

(iii) From the inequality (\ref{eq:Ipp}) and relations
(\ref{eq:Ipp2}), an elementary derivation implies the inequality
(\ref{eq:expon_estimateR}).  In fact, one multiplies (\ref{eq:Ipp}) by
$I'(x_0)$, integrates from $x_0$ to $a$, takes the square root and
divides by $I(x_0)$ and integrates again from $x_0$ to $x$ (see
\cite{Delitsyn12a} for details).
\end{proof}

The statement of Theorem \ref{theo:Robin} for Robin boundary condition
is weaker than that of Theorem \ref{theo:Dirichlet} in several
aspects:
\begin{itemize}
\item
Theorem \ref{theo:Robin} employes an explicit parameterization of the
branch through smooth height functions $y_1$ and $y_2$; in particular,
the statement is limited to planar domains.

\item
The branch has to be non-increasing (conditions $y_1(x)\geq 0$ and
$y_2(x)\leq 0$) and vanishing at the end (condition $y_1(a) =
y_2(a)$).

\item
The inequality (\ref{eq:expon_estimateR}) characterizes the $L_2$-norm
of the eigenfunction in the distant part of the branch, $\Omega(x)$,
while the inequality (\ref{eq:expon_estimate}) provided an estimate
at the cross-section $Q(x)$.

\item
The decay rate in Eq. (\ref{eq:expon_estimateR}) involves the
coefficient $\beta = 1/\sqrt{2}$ while the inequality
(\ref{eq:expon_estimate}) for non-increasing branches was proved for
$\beta = 1$.

\end{itemize}
These remarks suggest that the statement of theorem \ref{theo:Robin}
can be further extended while certain conditions may be relaxed.

We also note that the solution of the eigenvalue problem
(\ref{eq:eigen_cross}) has an explicit form
\begin{equation}
v(y) = c_1 \sin(\alpha y) + c_2 \cos(\alpha y),
\end{equation}
with two constants $c_1,c_2$ and $\mu_1(x) = \alpha^2$, while the
boundary conditions at the endpoints $y = y_1(x)$ and $y = y_2(x)$,
\begin{equation*}
\begin{split}
c_1 [-h_1 \sin(\alpha y_1) + \alpha \cos(\alpha y_1)] + c_2[-h_1 \cos(\alpha y_1) - \alpha \sin(\alpha y_1)] & = 0, \\
c_1 [h_2 \sin(\alpha y_2) + \alpha \cos(\alpha y_2)] + c_2[h_2 \cos(\alpha y_2) - \alpha \sin(\alpha y_2)] & = 0, \\
\end{split}
\end{equation*}
yield a closed equation on $\alpha$:
\begin{equation}
\begin{split}
& (\alpha^2 + h_1 h_2) \sin \alpha(y_2-y_1) + \alpha(h_1+h_2) \cos \alpha (y_2 - y_1) = 0  \\
\end{split}
\end{equation}
(here $h_{1,2}$ and $y_{1,2}$ depend on $x$).  This equation has
infinitely many solutions that can be found numerically.  The first
positive solution will determine $\mu_1(x)$.


\section{Illustrations}
\label{sec:illustrations}

In order to illustrate the geometrical structure of Laplacian
eigenfunctions, we compute them for several simple domains.  For all
considered examples, we impose Dirichlet boundary condition for the
sake of simplicity.

\subsection{Sine-shaped branches}

We consider the planar domain $\Omega$ composed of a basic domain
$\Omega_1$ (square of side $L$) and a branch $\Omega_2$ of constant
profile:
\begin{equation}
\label{eq:definition_Qi}
\Omega = \{(x,y) \in \R^2~:~ x \in (0,a),~ y \in (f(x),f(x)+b)\}.
\end{equation}
For this example, we choose $f(x) = \sin(x)$, fix $b = 1$, $L = 1.54$
and take several values for the length $a$.  Since the inradius of the
square $\Omega_1$ is greater than $j_{0}/\pi$, the first eigenvalue
$\lambda$ in these domains is smaller than $\mu = \pi^2$ for any
length $a$ so that the first eigenfunction should be localized in
$\Omega_1$ and exponentially decay along the branch $\Omega_2$.  This
behavior is illustrated on Fig. \ref{fig:sine_branch}.

\begin{figure}
\centering
\includegraphics[width=120mm]{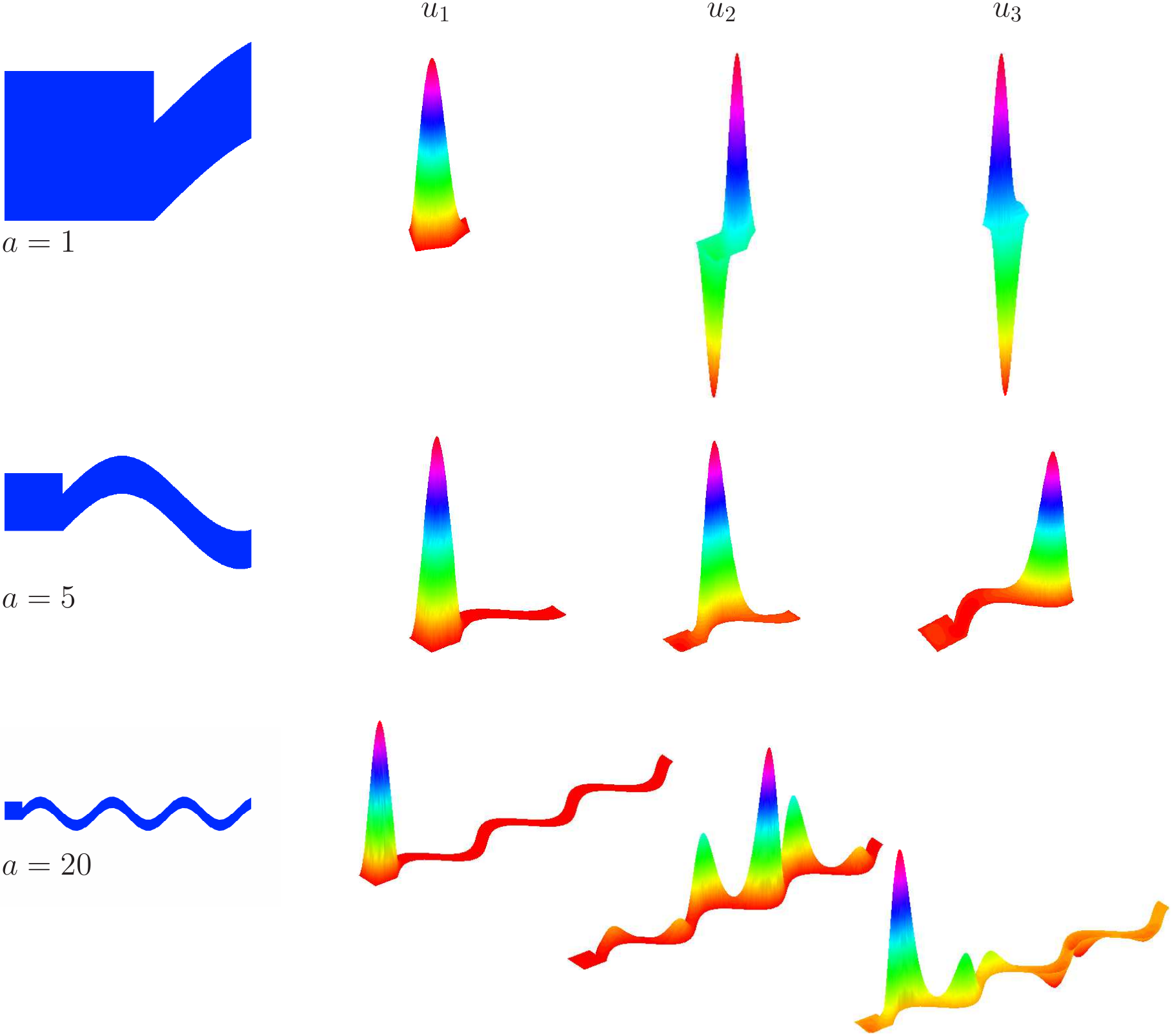}
\caption{ 
First three Dirichlet Laplacian eigenfunctions for the planar domain
$\Omega$ with sine-shaped branches, with $b=1$, $L = 1.54$ and $a=1$,
$a = 5$ and $a = 20$.  The first eigenfunction exponentially decays
along the branch $\Omega_2$, while the second and third eigenfunctions
do not.  The localization occurs in spite of the fact that the area of
$\Omega_1$ presents only $7.15\%$ of the total area for the last
domain with $a = 20$.  Note that the third eigenfunction for $a=5$ is
localized at the end of the branch $\Omega_2$.}
\label{fig:sine_branch}
\end{figure}

\subsection{Star-shaped domains}

Figure \ref{fig:Davies_domain} shows the first five Dirichlet
Laplacian eigenfunctions for a ``star-shaped'' domain which is formed
by a disk with many elongated triangles.  The inradius of this domain
is much greater than the largest width of triangular branches that
implies localization of the first eigenfunction.  One can see that all
the five eigenfunctions are localized in the disk and exponentially
decay along the branches.

\begin{figure}
\centering
\includegraphics[width=120mm]{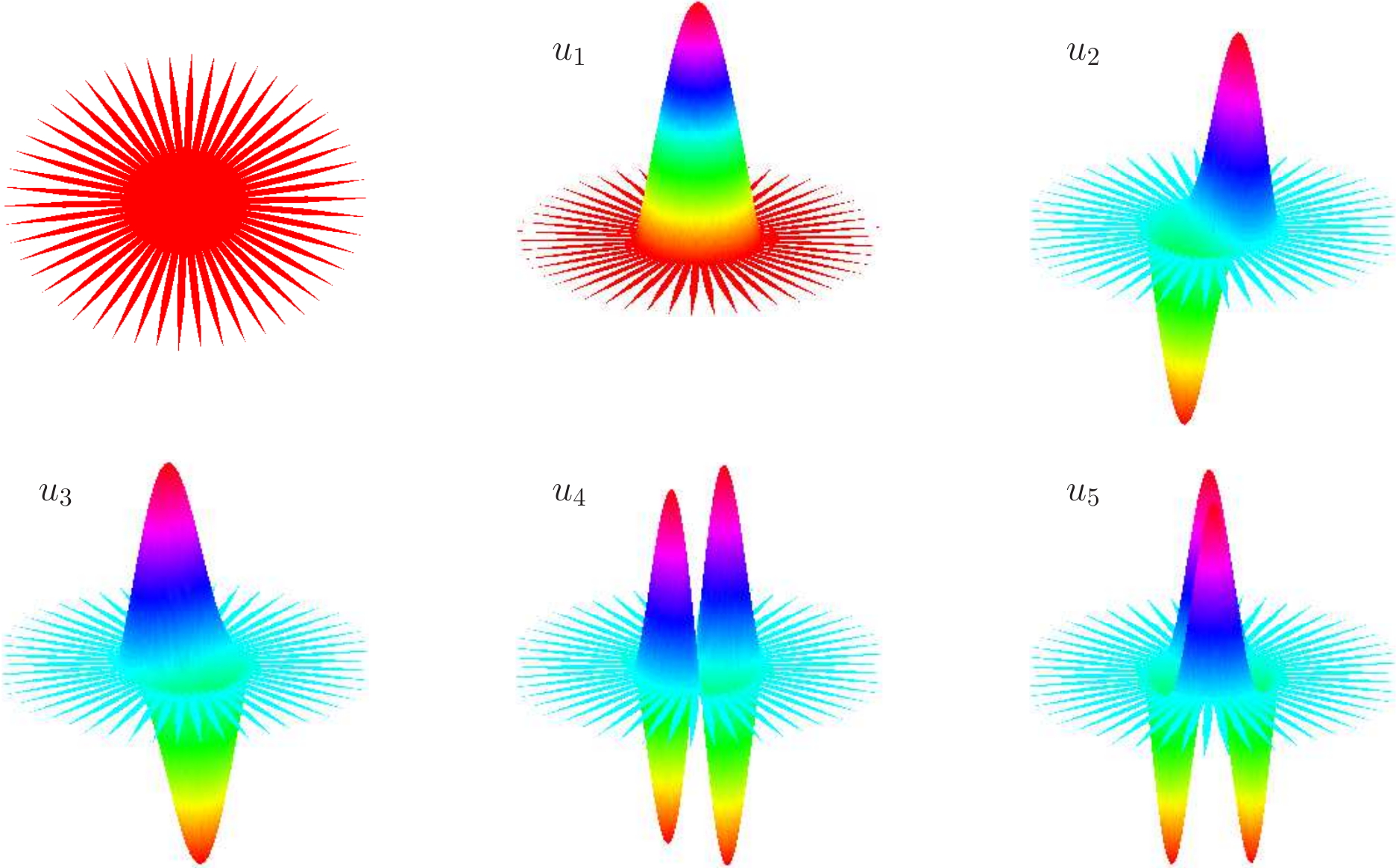}
\caption{
Localization of the first five Dirichlet Laplacian eigenfunctions in a
domain $\Omega$ with $51$ branches.}
\label{fig:Davies_domain}
\end{figure}

\subsection{Elongated polygons}
\label{sec:localization_elongated_triangle}

As we discussed at the beginning, the separation into a basic domain
and a branch is conventional.  We illustrate this point by showing the
exponential decay of the first Dirichlet Laplacian eigenfunction in
elongated polygons for which the ratio between the diameter and the
inradius is large enough.  We start by considering a right triangle
then extend the construction to general elongated polygons.

We consider a rectangle of sides $a$ and $b$ ($a \geq b$) on which a
right triangle $\Omega$ with legs $c$ and $d$ is constructed as shown
on Fig. \ref{fig:triangle}.  Note that the triangle is uniquely
defined by one leg (e.g., $d$), while the other leg is $c = ad/(d-b)$.
The vertical line at $x = a$ splits the triangle $\Omega$ into two
subdomains: $\Omega_1$ (a trapeze) and $\Omega_2$ (a triangle).  For
fixed $a$ and $b$, we are searching for a sufficient condition on $d$
under which the eigenfunction $u$ satisfying the eigenvalue problem
(\ref{eq:Laplace_in_D}), is localized in $\Omega_1$ and exponentially
decays along the subdomain $\Omega_2$.

\begin{figure}
\begin{center}
\includegraphics[width=60mm]{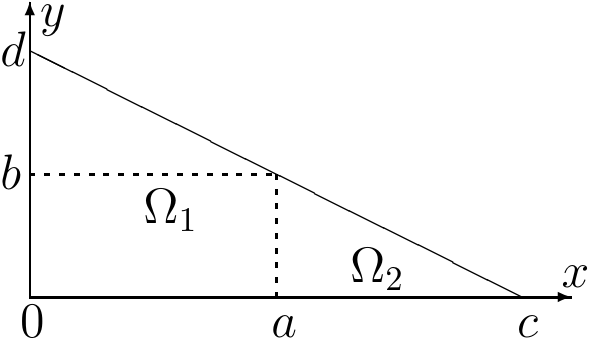}  \hskip 10mm
\includegraphics[width=80mm]{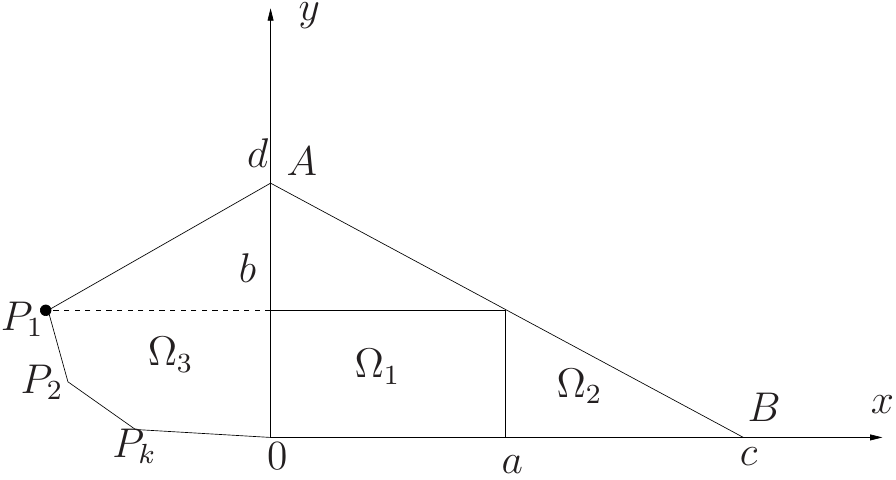}
\end{center}
\caption{
{\bf Left:} A right triangle $\Omega$ is decomposed into a trapeze
$\Omega_1$ and a right triangle $\Omega_2$.  {\bf Right:} An elongated
polygon $P$ with $n$ vertices $\left\{{0, B, A, P_1, P_2, \dots,
P_k}\right\}$, where $k=n-3$. Here, $\Omega_3$ is the polygon
including $n-1$ vertices $\left\{{0, A, P_1, P_2, \dots,
P_k}\right\}$. }
\label{fig:triangle}
\end{figure}

\begin{lemma}
Let $\Omega$ be a right triangle $\Omega$ defined by fixed $a,b,d>0$
($a\geq b$) (Fig. \ref{fig:triangle}).  Let $\ee \equiv (d/b) - 1$,
$P_A(\ee)$ and $P_B(\ee)$ be two explicit polynomials defined by
Eq. (\ref{eq:PA_PB}), and $\ee_0 \approx 0.0131$ is the zero of
$P_A(\ee)$.  If both inequalities
\begin{equation}
\label{eq:ineq}
\ee > \ee_0 , \qquad \frac{a^2}{b^2} > \frac{P_B(\ee)}{P_A(\ee)} 
\end{equation}
are fulfilled, then the first eigenfunction $u$ of the Laplace
operator in $\Omega$ with Dirichlet boundary condition exponentially
decays in $\Omega_2$.
\end{lemma}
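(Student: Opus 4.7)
The plan is to invoke Theorem~\ref{theo:Dirichlet} with the splitting of $\Omega$ by the vertical line $x = a$, which cuts the right triangle into the trapeze $\Omega_1$ and the smaller right triangle $\Omega_2$. Theorem~\ref{theo:Dirichlet} then yields exponential decay of $u$ along $\Omega_2$ as soon as the first Dirichlet eigenvalue $\lambda$ on $\Omega$ satisfies $\lambda < \mu$, where $\mu$ is the infimum of the first Dirichlet eigenvalues over the vertical cross-sections of $\Omega_2$. The entire task is therefore to verify this single scalar inequality under the hypotheses of the lemma.

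First I would compute $\mu$. With $c = ad/(d-b)$, the cross-section of $\Omega_2$ at abscissa $x \in (a, c)$ is an interval of length $\ell(x) = b(c-x)/(c-a)$, whose first Dirichlet eigenvalue is $\pi^2/\ell(x)^2$. Since $\ell(x)$ is decreasing on $(a,c)$, the cross-sectional eigenvalue is increasing, so $\mu = \pi^2/b^2$, attained as $x \to a$. The lemma thus reduces to proving $\lambda < \pi^2/b^2$.

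To bound $\lambda$ from above I would apply the Rayleigh principle to a test function supported in the trapeze $\Omega_1$ and extended by zero to $\Omega$. A natural choice is the product $\phi(x,y) = \sin(\pi x/a)\, y(y_0(x) - y)$, where $y_0(x) = d - (d-b)x/a$ parametrizes the slanted top edge of the trapeze. This function vanishes on all of $\partial\Omega_1$, so it lies in $H^1_0(\Omega)$. Its key advantage is that $\phi^2$, $\phi_x^2$, and $\phi_y^2$ are all finite sums of terms of the form $p(x)\, y^j \sin^p(\pi x/a) \cos^q(\pi x/a)$ with $p(x)$ polynomial. Integrating first in $y$ over $(0, y_0(x))$ produces a polynomial in $x$, and the remaining integrals $\int_0^a x^k \sin^2(\pi x/a)\,dx$ and $\int_0^a x^k \cos(2\pi x/a)\,dx$ all evaluate in closed form by integration by parts. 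Combining the pieces yields a fully explicit rational upper bound on $\lambda$ as a function of $(a,b,d)$.

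The last step is an algebraic rearrangement. Substituting $d = b(1+\ee)$ into the Rayleigh-quotient bound and requiring $R(\phi) < \pi^2/b^2$, one clears denominators to reach an inequality of the form $a^2\, P_A(\ee) > b^2\, P_B(\ee)$ for two explicit polynomials in $\ee$; whenever $P_A(\ee) > 0$, this is equivalent to $a^2/b^2 > P_B(\ee)/P_A(\ee)$, which is exactly (\ref{eq:ineq}). The threshold $\ee_0$ is then simply the smallest positive zero of $P_A$, below which this particular test function is too crude to beat $\pi^2/b^2$ no matter how elongated the trapeze is. The principal obstacle is bookkeeping rather than concept: the Rayleigh quotient with this product test function expands into many terms, and a suboptimal ansatz (for instance $\sin(\pi x/a)\sin(\pi y/b)$ on the inscribed rectangle $[0,a]\times[0,b]$, which only gives $\lambda \le \pi^2/a^2 + \pi^2/b^2 > \pi^2/b^2$) would fail to produce any useful bound, so the above choice, which exploits the full variable height $y_0(x)$ of the trapeze, is essential to reach the stated numerical threshold $\ee_0 \approx 0.0131$.
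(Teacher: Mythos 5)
Your proposal is correct and follows essentially the same route as the paper: split at $x=a$, note $\mu=\pi^2/b^2$ for the triangular branch, and bound $\lambda$ via the Rayleigh quotient of the test function $y\,(y_0(x)-y)\sin(\pi x/a)$ supported on the trapeze (identical, up to sign, to the paper's trial function $y\,(y-d+\tfrac{d-b}{a}x)\sin(\pi x/a)$), then rearrange the resulting explicit inequality into $a^2 P_A(\ee) > b^2 P_B(\ee)$. The only cosmetic difference is that you extend by zero and apply Rayleigh on $\Omega$ directly, while the paper passes through the intermediate eigenvalue $\gamma_1$ of the trapeze via domain monotonicity; these are equivalent.
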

\begin{proof}
The proof of the exponential decay relies on Theorem
\ref{theo:Dirichlet}.  For the ``branch'' $\Omega_2$, the largest
width is $b$ so that the threshold $\mu = \pi^2/b^2$.  Our goal is
therefore to find a sufficient geometrical condition to ensure that
the first eigenvalue $\lambda_1$ is smaller than $\mu$.  This
condition can be replaced by a weaker condition $\gamma_1 < \pi^2/b^2$
for the first eigenvalue $\gamma_1$ of the Laplace operator in the
trapeze $\Omega_1$ with Dirichlet boundary condition.

The eigenvalue $\gamma_1$ can be found from the Rayleigh's principle
as
\begin{equation*}
\gamma_1 = \inf\limits_{v\in H_0^1(\Omega_1)} \gamma(v) ,
\qquad \gamma(v) \equiv \frac{(\nabla v,\nabla v)_{L^2(\Omega_1)}}{(v,v)_{L^2(\Omega_1)}}. 
\end{equation*}
Taking a trial function
\begin{equation*}
v(x,y) = y \biggl(y - d + \frac{d-b}{a} x \biggr) \sin(\pi x/a) ,
\end{equation*}
which satisfies Dirichlet boundary condition on the boundary of
$\Omega_1$, we look for such conditions that $\gamma(v) < \pi^2/b^2$,
i.e.,
\begin{equation*}
Q(\ee) \equiv \frac{\pi^2}{b^2} (v,v)_{L^2(\Omega_1)} - (\nabla v,\nabla v)_{L^2(\Omega_1)}  > 0.
\end{equation*}
The direct integration yields
\begin{equation*}
Q(\ee) = \frac{b^5}{720\pi^2 a} \biggl(\kappa^2 P_A(\ee) - P_B(\ee) \biggr) ,
\end{equation*}
where $\kappa = a/b$, and $P_A(\ee)$, $P_B(\ee)$ are two polynomials
of the fifth degree:
\begin{equation}
\label{eq:PA_PB}
P_A(\ee) \equiv \sum\limits_{j=0}^5 A_j \ee^j, \qquad    P_B(\ee) \equiv \sum\limits_{j=0}^5 B_j \ee^j,
\end{equation}
with the explicit coefficients:
\begin{equation*}
\begin{array}{l l}
A_5 = 2\pi^4 - 15\pi^2 + 45\approx 91.7741      & B_5 = 2\pi^4 + 15\pi^2 - 45\approx 297.8622 \\
A_4 = 6(2\pi^4 - 10\pi^2 + 15) \approx 666.7328 & B_4 = 6(2\pi^4 + 10\pi^2 - 15)\approx 1671.0854 \\
A_3 = 30(\pi^4 - 4\pi^2 + 3)\approx 1827.9202   & B_3 = 30(\pi^4 + 3\pi^2)\approx 3810.5371 \\
A_2 = 20(2\pi^4 - 9\pi^2 + 9)\approx 2299.8348  & B_2 = 20(2\pi^4 + 3\pi^2)\approx 4488.5399 \\
A_1 = 30(\pi^4- 6 \pi^2)\approx 1145.7439       & B_1 = 30\pi^4\approx 2922.2727 \\
A_0 = 12(\pi^4 - 10\pi^2)\approx -15.4434      & B_0 = 12\pi^4\approx 1168.9091 \\
\end{array}
\end{equation*}
Note that all $B_j > 0$ and $A_j > 0$ except for $A_0 < 0$.  From the
fact that $A_j < B_j$, one has $Q(\ee) < 0$ for all $\ee>0$ when
$\kappa = 1$ (i.e., $a=b$).  We have therefore two parameters, $\ee$
and $\kappa$, which determine the sign of $Q$ and thus the exponential
decay.  Since $P_B(\ee) > 0$ for all $\ee \geq 0$, the condition
$Q(\ee) > 0$ is equivalent to two inequalities:
\begin{equation}
P_A(\ee) > 0 , \quad \kappa^2 > \frac{P_B(\ee)}{P_A(\ee)}.
\end{equation}
One can check that $P_A(\ee_0) = 0$ at $\ee_0 \approx 0.0131$ and
$P_A(\ee) > 0$ if and only if $\ee > \ee_0$ that completes the proof.
\end{proof}

We remind that this condition is not necessary (as we deal with an
estimate for the first eigenvalue).  For given $a$ and $b$ (i.e.,
$\kappa$), the above inequalities determine the values of $\ee$ (and
thus the leg $d$) for which localization occurs.  Alternatively, one
can express $a$ and $b$ through the legs $c$ and $d$ (and parameter
$\ee$) as
\begin{equation*}
a = \frac{c \ee}{\ee+1}, \qquad  b = \frac{d}{\ee+1} ,
\end{equation*}
from which $\kappa = c \ee/d$.  For given $c$ and $d$, one can vary
$\ee$ to get a family of inclosed rectangles (of sides $a$ and $b$).
The above inequalities can be reformulated as
\begin{equation*}
\begin{split}
\ee > \ee_0 \quad & \Leftrightarrow \quad b < \frac{d}{\ee_0+1}, \\
k^2 > \frac{P_B(\ee)}{P_A(\ee)}  \quad & \Leftrightarrow \quad  \frac{c}{d} > \frac{\sqrt{P_B(\ee)}}{\sqrt{P_A(\ee)}~\ee} \equiv f(\ee) . \\
\end{split}
\end{equation*}
The function $f(\ee)$ can be checked to be monotonously decreasing so
that the last inequality yields
\begin{equation}
\ee > f^{-1}(c/d)   \quad \Leftrightarrow \quad b < \frac{d}{f^{-1}(c/d)+1} ,
\end{equation}
where $f^{-1}$ denotes the inverse of the function $f(\ee)$.  This
condition determines the choice of the inscribed rectangle (the size
$b$) for a given triangle.

For the ``worst'' case $c = d$, for which a numerical computation
yields $f^{-1}(1) \approx 1.515$, one gets
\begin{equation*}
\frac{b}{d} < \frac{1}{2.515} \approx 0.3976 .
\end{equation*}
This example shows that one can always inscribe a rectangle in such a
way that $\lambda < \pi^2/b^2$.  However, the ``branch'' $\Omega_2$ in
which an exponential decay of the eigenfunction is expected, may be
small.  Figure \ref{fig:localization_triangles} illustrates these
results.

\begin{figure}
\begin{center}
\includegraphics[width=80mm]{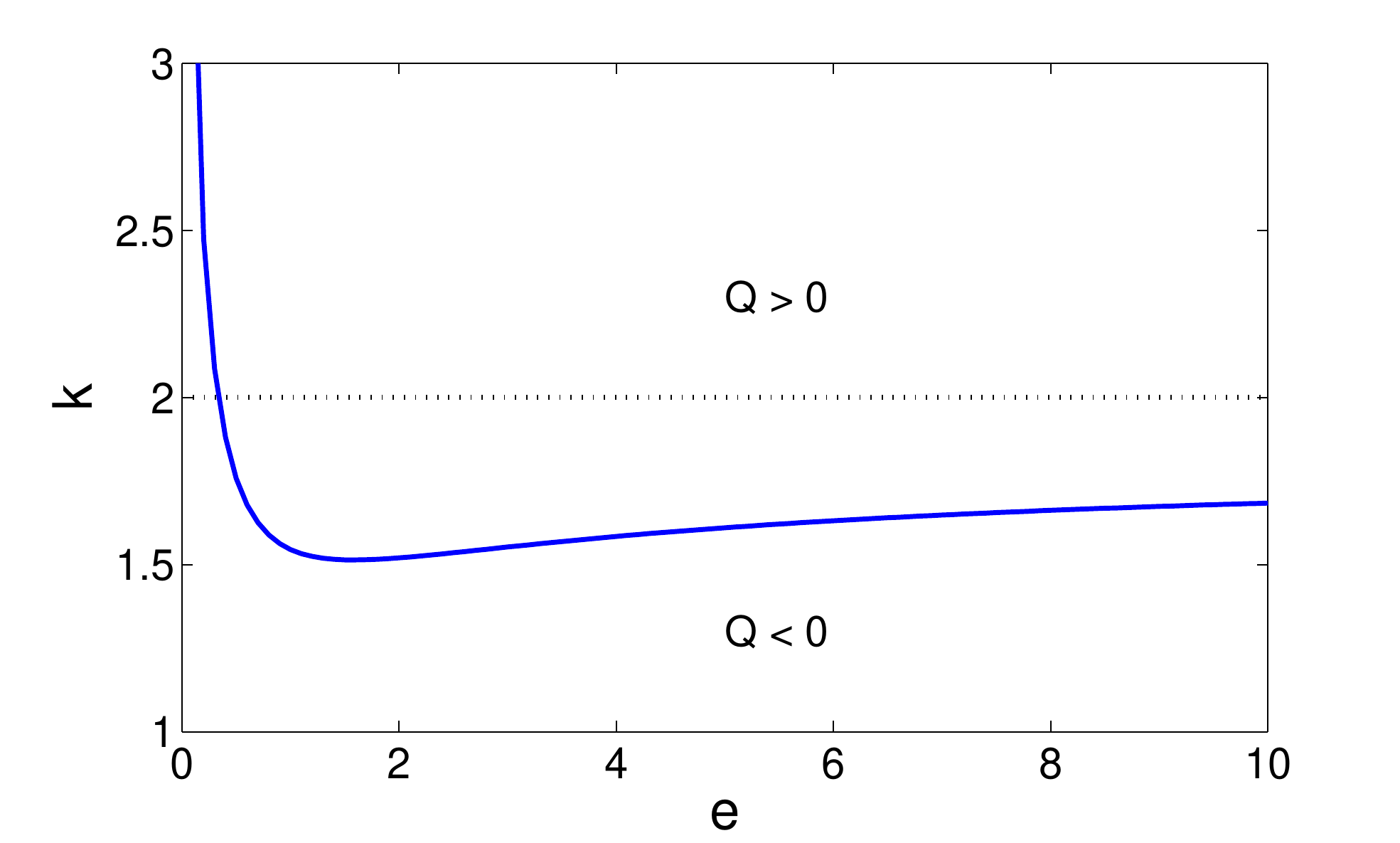}
\end{center}
\caption{
The diagram in the space of parameters $\ee$ and $\kappa$ for positive
and negative signs of $Q$ which correspond to localization and
non-localization regions.  For a given $\kappa$ (e.g., $\kappa = 2$
shown by horizontal dotted line), one can determine the values of
$\ee$, for which localization occurs.}
\label{fig:diagram}
\end{figure}

\begin{figure}
\centering
\includegraphics[width=150mm]{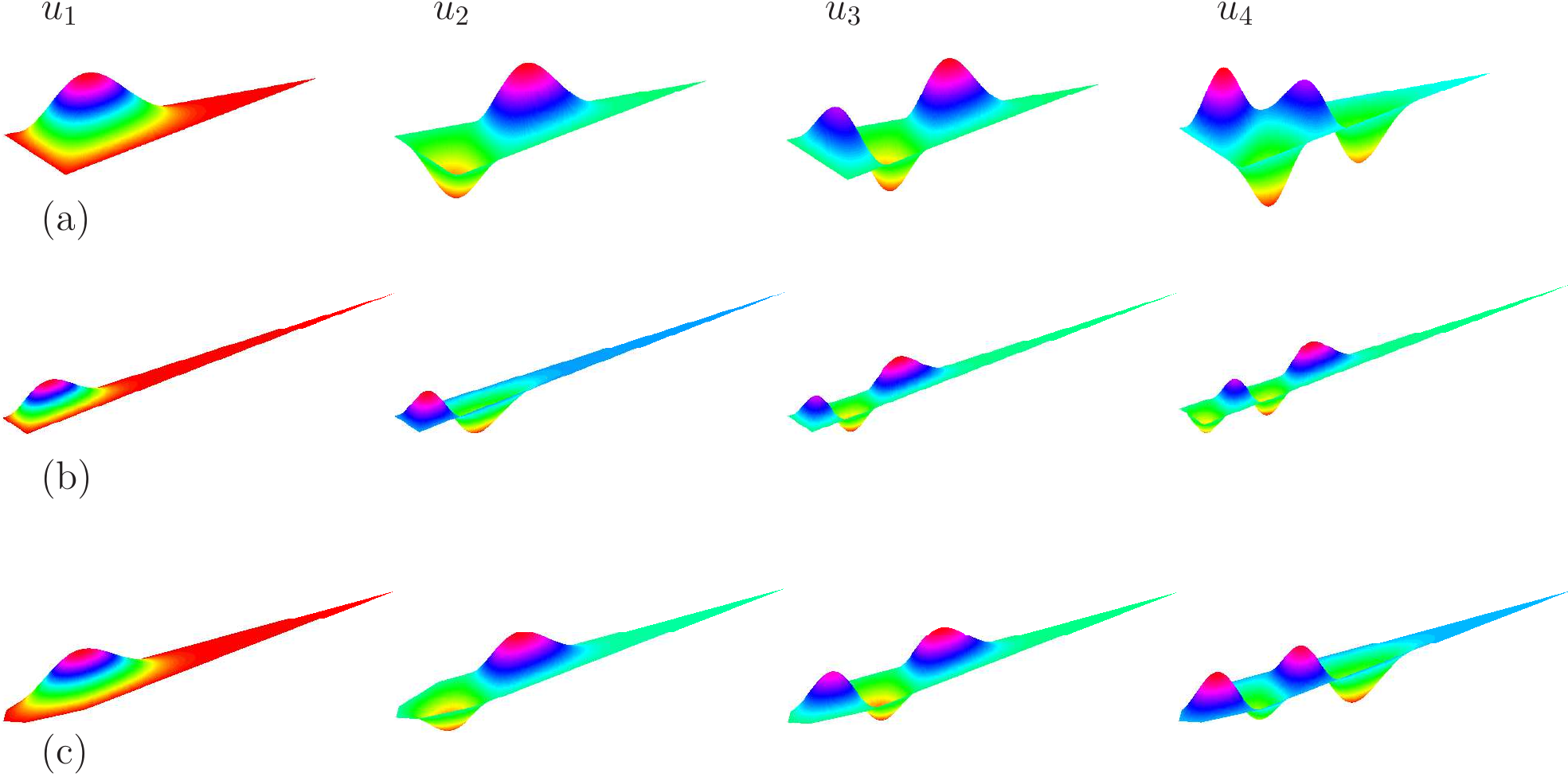}
\caption{ 
Several Dirichlet Laplacian eigenfunctions in {\bf (a)} the right
triangle with $a=2$, $b=1$ and $\ee\approx 0.32$ for which $c=8.25$
and $d=1.32$; {\bf (b)} the right triangle with $a=4$, $b=1$ and
$\ee\approx 0.08$ for which $c=61.14$ and $d=1.07$; and {\bf (c)}
elongated hexagon.  In all these cases, the first eigenvalue
$\lambda_1$ is smaller than $\pi^2$, while the associated
eigenfunction decays exponentially along the ``branch'' $\Omega_2$.
The other eigenfunctions are also concentrated in $\Omega_1$.}
\label{fig:localization_triangles}
\end{figure}

\begin{remark}
Any enlargement of the subdomain $\Omega_1$ on Fig. \ref{fig:triangle}
further diminishes the eigenvalue $\gamma_1$ and thus favors the
exponential decay in $\Omega_2$.  In particular, for each positive
integer $n$ ($n\ge 3$), one can construct elongated polygons of $n$
vertices for which the first Dirichlet Laplacian eigenfunction is
localized in $\Omega_1$ (Fig. \ref{fig:triangle}b).  Figure
\ref{fig:localization_triangles}c shows first eigenfunctions in
elongated hexagons.
\end{remark}

\end{document}